\documentclass{article}
\usepackage{geometry}
\usepackage{amsfonts, amsmath, amssymb, amsthm}

\usepackage{graphicx} 

\usepackage{listings}
\usepackage{xcolor}

\usepackage{hyperref}
\usepackage{cleveref}
\usepackage{mathrsfs}

\usepackage{algorithm}
\usepackage{algpseudocode}

\newtheorem{theorem}{Theorem}[section]
\newtheorem{lemma}[theorem]{Lemma}

\crefname{conjecture}{Conjecture}{Conjectures}

\theoremstyle{definition}
\newtheorem{definition}{Definition}[section]

\DeclareMathOperator*{\E}{\mathbb{E}}

\newcommand{\Insert}{\textsf{Insert}}

\newcommand{\Occupy}{\mathsf{Occupy}}

\title{A Sharp Analysis of Fixed Offset Hash Tables\thanks{Supported by NSF Grant CCF-2446604.}}

\author{Yuhao Guo\\
{\small IIIS, Tsinghua University}
\and
Seth Pettie\\
{\small University of Michigan}
\and
Chengzhang Wan\\
{\small IIIS, Tsinghua University}
}

\date{}

\begin{document}
\maketitle
\begin{abstract}
    We consider the class of open addressing hash tables that uses a \emph{fixed} offset sequence $r_0,r_1,\ldots,r_{n-1}$ to find an open slot in the table.  As special cases this subsumes linear probing ($r_i=i$)
    and quadratic probing ($r_i=i^2$). 

    \medskip 

    We give a simple analysis that the expected insertion time when the load factor is 
    $1-\epsilon$ is $O(\epsilon^{-2})$.  This bound is tight for linear probing (but not for quadratic probing), 
    so the analysis cannot be improved at 
    this level of generality.
\end{abstract}

\section{Introduction}

We consider a class of open addressing hash tables defined by a \emph{fixed offset} probe sequence, which includes linear probing~\cite{Knuth63} and quadratic probing~\cite{Maurer68}
as special cases.  
A hash table in this class is defined by a 
permutation $(r_0,r_1,\ldots,r_{n-1})$
of $[n]$, indexing the slots of the table.  
An $\Insert(x)$ operation places 
$x\in U$ into the first slot among

\[
h(x), \; h(x) + r_1, \;\ldots, \;h(x) + r_{n-1},
\; \text{(modulo $n$)}
\]
where $h : U\to [n]$ is the hash function
and $r_0=0$ without loss of generality.

Whereas the behavior of linear 
probing has been understood for 
decades~\cite{Knuth63,Peterson57,SchayS62,KonheimW66,MendelsonY80,FlajoletPV98,Janson01}, 
there were essentially no theoretical results on 
quadratic probing and general fixed offset probing until 
very recently.
Kuszmaul and Xi~\cite{KuszmaulX24} proved that all fixed offset hash tables have constant expected insertion time up to load factor 
$0.089$ (8.9\%).  
Guo, Pettie, and Wan~\cite{GuoPW27} 
improved this load factor limit
to $35.74\%$ for any fixed offset probe 
sequence $(r_i)$ and to $37.61\%$ 
for the specific 
case of quadratic
probing $(r_i=i^2)$.
Zamir~\cite{Zamir26} proved that at load factor $1-\epsilon$,
any fixed offset hash table has $O(\epsilon^{-2}\log n)$ expected insertion cost.
Hu, Kuszmaul, Liang, Walzer, Yu, and Zhou~\cite{HuKLWYZ26b} 
recently proved that for quadratic probing
(but not other fixed offset schemes)
the expected insertion
at load factor 
$1-\epsilon$, $\epsilon > 1/\sqrt{\log n}$, 
is $\epsilon^{-(1+o(1))}$,
nearly reaching the conjectured 
bound of $O(\epsilon^{-1})$.
The $\epsilon^{-(1+o(1))}$ 
probe bound of 
Hu et al.~\cite{HuKLWYZ26b} 
is superior to the 
$\Theta(\epsilon^{-2})$ bound of 
linear probing~\cite{Knuth63}, 
though they are comparable in terms of 
cache misses, as they typically 
touch a \emph{local} area of the array of 
width $\approx \epsilon^{-2}$.
Zamir~\cite{Zamir26} proved that no open addressing
hash table can have locality $o(\epsilon^{-2})$.

\subsection{New Result}
In this short note we prove that the expected insertion 
cost for any fixed offset hash table at load factor $1-\epsilon$
is at most $1 + 2\epsilon^{-2}$.
This is the strongest statement that can be made, 
asymptotically, since the expected 
insertion cost of linear probing is $\Theta(\epsilon^{-2})$.
Even when one considers constant factors,
it cannot be improved by more than a factor of 4,
as the cost of linear probing is precisely 
$\frac{1}{2}(1 +\epsilon^{-2})$~\cite{Knuth63}.

Our analysis is unlike the recent work on quadratic and fixed offset probing~\cite{KuszmaulX24,GuoPW27,HuKLWYZ26,Zamir26}, 
and unlike prior analyses of linear probing~\cite{Knuth63,Peterson57,SchayS62,KonheimW66,MendelsonY80,FlajoletPV98,Janson01,PaghPR09,PatrascuT16}.
It is elementary, though counterintuitive in some ways.

\subsection{Organization}
\Cref{sect:prelim} introduces 
some notation and conventions
concerning sequences and sets 
of sequences.
\Cref{sect:occupation} defines
the $\Occupy_S(A)$ function, 
which counts how many slots of $S$
are occupied by insertion sequence $A$.
\Cref{sect:recursive-bound} proves
a recurrence for the number
of sequences that fully occupy $S$,
and \Cref{sect:expected-insertion-time} applies
this recurrence to upper bound the expected insertion time.

\section{Notation, Definitions, and Conventions}\label{sect:prelim}

The set of occupied slots in the hash table after inserting $(x_1,\ldots,x_m) \in U^m$ is uniquely determined by
$(h(x_1),\ldots,h(x_m)) \in [n]^m$
and the fixed offset permutation
$(r_0,r_1,\ldots,r_{n-1})$,
so we shall regard the insertion sequence itself as an 
element of $[n]^m$.
We assume $h : U\to [n]$ 
is uniformly random,
so all such sequences occur with probability $n^{-m}$.
If $x_j$ is ultimately inserted into slot $h(x_j) + r_i$, the cost of the insertion is $i+1$.

Lower case 
letters $a,b,c\in [n]$ are usually elements, 
upper case letters 
are either 
sequences $A,B,C \in [n]^*$
or sets of elements, 
and calligraphic letters
$\mathscr{A},\mathscr{B},\mathscr{C} \subset [n]^*$
are sets of sequences.
Catenation between a sequence and an element
or catenation of two sequences is indicated by juxtaposition,
e.g., $Ab$ or $AB$.
If $A=(a_1,a_2,\ldots,a_{|A|})$ is a sequence,
$A^{<i}$ and $A^{>i}$ are the sequences
$(a_1,\ldots,a_{i-1})$ and $(a_{i+1},\ldots,a_{|A|})$, respectively, while 
$A^{-i} = A^{<i}A^{>i} = (a_1,\ldots,a_{i-1},a_{i+1},\ldots,a_{|A|})$ is $A$ after removing 
the $i$th element.

\section{Properties of Occupation}\label{sect:occupation}

Much of the analysis revolves around the \emph{occupation} function $\Occupy$, 
which implicitly depends on the fixed offset permutation $(r_0,\ldots,r_{n-1})$
although the specific permutation itself 
never enters into the analysis.

\begin{definition}
    Let $S \subseteq [n]$ be a set of slots
    and $A\in [n]^*$ be an input sequence.
    $\Occupy_S(A)$ denotes the number of slots
    in $S$ that are occupied after inserting
    $A$ into the empty table.
\end{definition}

\Cref{lem:Aoccupylem} lists some basic properties 
of the $\Occupy$ function.

\begin{lemma}\label{lem:Aoccupylem}
Let $S \subseteq [n]$ be any set of slots.
\begin{enumerate}
\item For any sequence $A\in [n]^*$ 
and index $i\leq |A|$, \[
\Occupy_S(A^{-i}) 
\leq 
\Occupy_S(A) 
\leq 
\Occupy_S(A^{-i})+1.
\]
\item Let $A,B\in [n]^*$ be two sequences 
for which $A^{-i} = B^{-i}$, for some index $i$.  Then \[
|\Occupy_S(A)-\Occupy_S(B)|\leq 1.
\]
\end{enumerate}
\end{lemma}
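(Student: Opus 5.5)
The plan is to prove a stronger, set-level statement from which both parts follow at once. Write $O(A)\subseteq[n]$ for the set of occupied slots after inserting $A$ into the empty table; I assume $|A|\le n$, so every insertion succeeds. The claim is that for every index $i$,
\[
O(A^{-i}) \subseteq O(A) \quad\text{and}\quad |O(A)\setminus O(A^{-i})| = 1 .
\]
Given this, intersecting with $S$ gives $\Occupy_S(A^{-i}) \le \Occupy_S(A) \le \Occupy_S(A^{-i})+1$, which is part 1. For part 2, put $C = A^{-i} = B^{-i}$. Part 1 places both $\Occupy_S(A)$ and $\Occupy_S(B)$ in the interval $[\Occupy_S(C), \Occupy_S(C)+1]$, so they differ by at most $1$.

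To prove the set-level claim, I would run the two insertion processes side by side. The inputs are $A$ and $A^{-i}$, and I compare the tables after processing $a_1,\dots,a_j$ in the first process and the same elements minus $a_i$ in the second.
\begin{itemize}
\item For $j<i$ the two inputs are identical, so the two tables coincide.
\item At $j=i$ the first table gains exactly one slot $e$, namely where $a_i$ lands. The invariant is that the larger table equals the smaller one plus a single extra slot $e$.
\end{itemize}

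The inductive step is the only real content. Insert the next element $a_{j+1}$ into both tables. Both tables follow the identical probe order $a_{j+1}+r_0, a_{j+1}+r_1,\dots$, because the offsets are fixed and do not depend on the table contents. Let $s$ be the first free slot in the smaller table.
\begin{itemize}
\item If $s\ne e$: every slot probed before $s$ is occupied in the smaller table, hence also in the larger one. Slot $s$ is free in the larger table too, since it is neither in the smaller set nor equal to $e$. So both processes place $a_{j+1}$ at $s$, and the invariant persists with the same $e$.
\item If $s=e$: the smaller table places $a_{j+1}$ at $e$. The larger table continues along the probe order to its next free slot $t$, which exists because the table is not full. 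Now the larger set equals the smaller set plus $\{t\}$, so the invariant persists with $e$ replaced by $t$.
\end{itemize}
Induction to $j=|A|$ gives the claim.

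The main subtlety is the $s=e$ case, where the "extra" slot migrates. One might worry that the discrepancy could grow there, but the argument shows the displaced element in the larger table simply moves the single extra slot forward. I also need the table never to be full when the larger process inserts, which is guaranteed by $|A|\le n$. If $|A|>n$ were allowed, one would adopt the convention that a failed insertion occupies nothing, and the same case analysis goes through.
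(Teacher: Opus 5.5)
Your proof is correct and takes essentially the same approach as the paper: you couple the insertion runs on $A$ and $A^{-i}$ and show by induction that the occupied set for $A$ equals the one for $A^{-i}$ plus a single migrating extra slot, using the same two cases on whether the smaller table's first free probe slot is that extra slot, and you obtain Part 2 by sandwiching both $\Occupy_S(A)$ and $\Occupy_S(B)$ between $\Occupy_S(C)$ and $\Occupy_S(C)+1$. (One small caveat: your closing remark about $|A|>n$ is slightly off, because once the larger table is full the next insertion fills the smaller table's last free slot and the two sets coincide, so only $|O(A)\setminus O(A^{-i})|\le 1$ survives; this case never arises in the paper, and Part 1 still holds there.)
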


\begin{proof}
\noindent\underline{Claim 1.}
Let $T$ and $T^{-i}$ denote the state of the hash table for the two input sequences, 
at various points in time.
Initially, both $T$ and $T^{-i}$ are empty,
and after the prefix $A^{<i}$ $T,T^{-i}$ 
are identical.
At any subsequent time, the number of occupied slots in $T$ is one more than that of $T^{-i}$.

We prove by induction that after
any prefix of $(a_{i+1},\ldots,a_{|A|})$ is inserted, the set of occupied slots in $T^{-i}$ is a subset of the occupied slots in $T$.
Suppose $s\in [n]$ is the slot occupied in 
$T$ but not $T^{-i}$ just before $a_j$, $j>i$,
is inserted.  The claim holds
if $a_j$ ultimately occupies slot $s$ in $T^{-i}$
(and hence $a_j$ occupies any other slot in $T$).
If $a_j$ ultimately occupies a slot 
$s'\neq s$ in $T^{-i}$ then the probe sequence
avoids $s$, hence $a_j$ occupies $s'$ in $T$ as well.
Either way the claim holds after inserting $a_j$.

\medskip 
\noindent\underline{Claim 2.}
By Claim 1,
\[
\Occupy_S(B) 
\leq \Occupy_S(A^{-i}) + 1
\leq \Occupy_S(A) + 1,
\]
which implies the claim.
\end{proof}

The analysis focuses on the 
relationship between the following functions.
\begin{definition}\label{def:H-F}
Define the following:
\begin{align*}
    \mathscr{H}_S(t) 
    &= \{A\in [n]^t \mid \Occupy_S(A) = |S|\},\\
    \mathscr{H}_S^{-i}(t)
    &= \{A\in [n]^t \mid \Occupy_S(A^{-i}) = |S|\},\\
    F_S(t) &= |\mathscr{H}_S(t)|,\\
    p_{k}(t) &= n^{-t}\max\{F_S(t) \mid S\subset [n] \mbox{ and } |S|=k\}.
\end{align*}
\end{definition}

In other words, $\mathscr{H}_S(t)$ is the set of length-$t$ sequences that fully occupy $S$
and $\mathscr{H}_S^{-i}(t) \subseteq \mathscr{H}_S(t)$
are those that still occupy $S$ when the $i$th element
is removed. 
$F_S(t)$ is the number of sequences in $\mathscr{H}_S(t)$,
and $p_k(t)$ is an upper bound on the probability that any particular $S$ of size $k$ is fully occupied, 
when picking a length-$t$ input sequence at random.

\begin{lemma}\label{lem:H-i}
    $|\mathscr{H}_S^{-i}(t)| = nF_S(t-1)$.
\end{lemma}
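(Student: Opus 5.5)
The plan is to exhibit an explicit bijection between $\mathscr{H}_S^{-i}(t)$ and $[n]\times \mathscr{H}_S(t-1)$. Membership of $A\in[n]^t$ in $\mathscr{H}_S^{-i}(t)$ is, by definition, the condition $\Occupy_S(A^{-i}) = |S|$. This condition depends only on the length-$(t-1)$ sequence $A^{-i}$ and not on the $i$th entry $a_i$.

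Consider the map $\phi : [n]^t \to [n]\times[n]^{t-1}$ given by $\phi(A) = (a_i, A^{-i})$. It is a bijection, whose inverse reinserts the first coordinate into position $i$. That is, $(b, B)\mapsto B^{<i}\,b\,B^{\geq i}$, where $B^{\geq i}$ denotes the suffix of $B$ starting at position $i$. Under $\phi$, the condition $A\in\mathscr{H}_S^{-i}(t)$ becomes $A^{-i}\in\mathscr{H}_S(t-1)$, with $a_i$ unconstrained. Hence $\phi$ restricts to a bijection $\mathscr{H}_S^{-i}(t)\to [n]\times\mathscr{H}_S(t-1)$. Taking cardinalities gives $|\mathscr{H}_S^{-i}(t)| = n\cdot|\mathscr{H}_S(t-1)| = nF_S(t-1)$.

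The only point needing care is bookkeeping: the index $i$ must satisfy $1\le i\le t$. One must also check that $A^{-i}$ ranges over all of $[n]^{t-1}$ exactly once for each choice of $a_i$, which is immediate from the inverse map. There is no real obstacle; the identity is purely combinatorial and uses none of the hashing structure. Lemma~\ref{lem:Aoccupylem} is not even needed here. It enters only for the remark that $\mathscr{H}_S^{-i}(t)\subseteq\mathscr{H}_S(t)$, which is not part of this statement.
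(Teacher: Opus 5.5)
Your bijection $A\mapsto(a_i,A^{-i})$ is correct and is an explicit version of the paper's one-line argument that there are $n$ free choices for the $i$th element. You are also right that Lemma~\ref{lem:Aoccupylem} is not needed for the count under the stated definition: the paper cites it in this proof, but its real role is the containment $\mathscr{H}_S^{-i}(t)\subseteq\mathscr{H}_S(t)$, which is what justifies $|\mathscr{H}_S(t)\setminus\mathscr{H}_S^{-i}(t)| = F_S(t)-nF_S(t-1)$ in Lemma~\ref{lem:deltabound}.
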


\begin{proof}
    This follows from \Cref{lem:Aoccupylem}
    and the fact that there are $n$ choices
    for the $i$th element of the sequence.
\end{proof}

\begin{lemma}\label{lem:insertioncost}
For any fixed offset permutation, 
the expected 
cost of inserting the $(t+1)$th 
element is at most 
\[
1+\sum_{k=1}^{t} p_k(t).
\]
\end{lemma}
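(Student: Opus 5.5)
We write the expected cost as a tail sum,
\[
\E[\text{cost}] = \sum_{j\ge 0}\Pr[\text{cost} > j],
\]
and bound each tail probability by a single $p_k(t)$. Let $A\in[n]^t$ be the first $t$ insertions and $b$ the $(t+1)$th hash value, independent and uniform. The cost exceeds $j$ exactly when the probed slots $b+r_0,\ldots,b+r_{j-1}$ are all occupied after inserting $A$. Define the set
\[
S_j(b)=\{b+r_0,\ldots,b+r_{j-1}\} \pmod n .
\]
It has exactly $j$ distinct elements, because $(r_i)$ is a permutation. So cost $>j$ if and only if $\Occupy_{S_j(b)}(A)=j$, that is, $A\in\mathscr{H}_{S_j(b)}(t)$.

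The key steps, in order, are:

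\begin{enumerate}
\item The term $j=0$ contributes exactly $1$, since every insertion costs at least $1$.
\item For $j>t$ the tail probability is $0$. After $t$ insertions only $t$ slots are occupied, so a set of $j>t$ slots cannot be fully occupied.
\item For $1\le j\le t$, condition on $b$. Since $A$ is uniform on $[n]^t$ and independent of $b$,
\[
\Pr[\text{cost}>j \mid b] = n^{-t}F_{S_j(b)}(t) \le p_j(t),
\]
because $|S_j(b)|=j$ and $p_j(t)$ is the maximum of $n^{-t}F_S(t)$ over all $|S|=j$.
\item Averaging over $b$ gives $\Pr[\text{cost}>j]\le p_j(t)$.
\end{enumerate}

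Summing the tail probabilities then gives $1+\sum_{k=1}^t p_k(t)$, which is the bound in the statement.

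The only point needing care is the bookkeeping between cost and the number of probed slots. Cost $i+1$ means the key lands in slot $h(x)+r_i$, so cost $>j$ means the first $j$ probes (indices $0,\ldots,j-1$) all failed. This is exactly why the set has size $j$ and is indexed by $p_j$.

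Some care is also needed with the definition of $p_k$: it uses $S\subset[n]$ with $|S|=k$, and this is compatible with $k\le t\le n$. If $t=n$ the table is full and the next insertion is undefined, so we implicitly take $t<n$. None of the earlier lemmas (\Cref{lem:Aoccupylem}, \Cref{lem:H-i}) are needed here; they enter only in the later recurrence for $p_k$.
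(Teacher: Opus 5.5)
Your proof is correct and matches the paper's argument. The paper writes the expected cost as $1+\sum_{k=1}^{t}\Pr(S_k \text{ is occupied})$ by linearity of expectation, which is the same as your tail-sum formula, and bounds each term by $p_k(t)$. Your explicit conditioning on the $(t+1)$th hash value, and your reason for truncating the sum at $k=t$, spell out steps the paper leaves implicit.
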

\begin{proof}
Let $r_0,r_1,\dots,r_{n-1}$ be the fixed offset permutation, 
and $A\in [n]^m$ the uniformly
random input sequence $(a_1,\ldots,a_m)$.
Define $S_k = \{a_{t+1}+r_0,a_{t+1}+r_1,\dots,a_{t+1}+r_{k-1}\}$ 
to be the first $k$ slots in the probe sequence
for $a_{t+1}$.
The expected insertion cost is
\begin{align*}
    \E(\text{Cost of inserting $a_{t+1}$})
    &= 1 + \sum_{k=1}^{t} \Pr(S_k \text{ is occupied}) & \text{Linearity of expectation}\\
    &\leq 1 + \sum_{k=1}^{t} p_k(t) & \text{Definition of $p_k(t)$.}
\end{align*}
\end{proof}

\Cref{lem:insertioncost} reduces the problem to bounding $p_k(t)$. In \Cref{sect:recursive-bound}
we give a recursive expression for $F_S(t)$ and $p_k(t)$, 
then analyze $p_k(t)$ and the cost of insertion in \Cref{sect:expected-insertion-time}.

\section{The Recursive Bound}\label{sect:recursive-bound}

The $F_S(t)$ and $p_k(t)$ functions are difficult to evaluate directly, but easy to evaluate when the table is full.
Clearly $F_S(n)=n^{n}$ and $p_k(n)=1$.
\Cref{thm:recursivepcalculation}
allows us to work backward from these values,
by upper bounding $F_S(t-1)$ and $p_k(t-1)$
in terms of $F_S(t)$ and $p_k(t)$, respectively.

\begin{theorem}\label{thm:recursivepcalculation}

For any set $S$ with $|S|=k$, and any $k<t\leq n$,
\begin{align*}
nF_S(t-1)\leq F_S(t)\left(1-\dfrac{k}{t}+\dfrac{k}{n}\right).
\intertext{Moreover, we have}
p_k(t-1)\leq p_k(t)\left(1-\dfrac{k}{t}+\dfrac{k}{n}\right).
\end{align*}
\end{theorem}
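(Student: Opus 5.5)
The plan is to prove the first inequality by double counting pairs $(A,i)$ with $A\in\mathscr{H}_S(t)$ and $1\le i\le t$, and then to get the second inequality by taking the maximum over $|S|=k$. By \Cref{lem:H-i}, summing $|\mathscr{H}^{-i}_S(t)|=nF_S(t-1)$ over $i=1,\ldots,t$ gives
\[
t\,nF_S(t-1)=\sum_{A\in\mathscr{H}_S(t)} g(A),\qquad g(A)=\#\{i\le t : \Occupy_S(A^{-i})=k\}.
\]
After multiplying the target by $t$, it suffices to show
\[
\sum_{A\in\mathscr{H}_S(t)} g(A)\le F_S(t)\Bigl(t-k+\tfrac{kt}{n}\Bigr).
\]
Equivalently, the average number of ``bad'' indices, those $i$ with $\Occupy_S(A^{-i})=k-1$, must be at least $k(n-t)/n$ over $A\in\mathscr{H}_S(t)$.

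To count bad indices I would use the vacated-slot map. By Claim 1 of \Cref{lem:Aoccupylem}, the occupied set of $A^{-i}$ is the occupied set of $A$ minus exactly one slot $v_i(A)$, and $i$ is bad iff $v_i(A)\in S$. The heart of the proof is to show that every slot $s\in S$ is vacated by some index. I would try this first by following the displacement chain: remove the element currently sitting in $s$, which frees $s$, unless some later element probed past $s$ and would now drop into it. That element's old slot is then freed, and the chain continues. The idea is to trace this chain backwards to find, for each $s$, an index $i$ with $v_i(A)=s$. If $i\mapsto v_i(A)$ were a bijection onto the $t$ occupied slots, we would get $g(A)=t-k$ exactly, which is stronger than needed.

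I expect this bijectivity to fail in general for non-linear offsets, and this is the main obstacle. The weaker bound with the $+kt/n$ slack suggests arguing on average over random $A$ instead. For the averaging I would group sequences into fibers: fix $i$ and $C\in[n]^{t-1}$, and let $c$ range over $[n]$ in position $i$. If $C\in\mathscr{H}_S(t-1)$, the whole fiber lies in $\mathscr{H}_S(t)$. Otherwise $\Occupy_S(C)=k-1$ with a unique missing slot $u\in S$, and only those $c$ whose insertion cascades into $u$ complete $S$.

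I would then use translation invariance: shifting every entry of a sequence by $\delta$ modulo $n$ shifts the occupied set by $\delta$. This lets me average over rotations of $S$, and since each empty slot is uniformly distributed, a $k/n$ fraction of the ``new slot'' outcomes fall in $S$. That averaging should produce the correction term $kt/n$, with $t$ coming from the $t$ positions.

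Once the first inequality is proved, the second is immediate. Take $S$ with $|S|=k$ attaining the maximum in $p_k(t-1)$, so that $n^{-(t-1)}F_S(t-1)=p_k(t-1)$. Dividing the first inequality by $n^t$ gives
\[
p_k(t-1)=n^{-(t-1)}F_S(t-1)\le n^{-t}F_S(t)\Bigl(1-\tfrac{k}{t}+\tfrac{k}{n}\Bigr)\le p_k(t)\Bigl(1-\tfrac{k}{t}+\tfrac{k}{n}\Bigr).
\]
Here the factor $1-k/t+k/n$ is nonnegative because $k<t\le n$.
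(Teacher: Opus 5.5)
Your reduction is sound. Summing \Cref{lem:H-i} over $i$ shows that the first inequality is equivalent to a lower bound on bad pairs: the number of pairs $(A,i)$ with $A\in\mathscr{H}_S(t)$ and $\Occupy_S(A^{-i})=k-1$ must be at least $F_S(t)\,k(1-t/n)$. Your passage from the $F_S$ bound to the $p_k$ bound is also correct. But that lower bound is the whole content of the theorem, and the proposal never proves it.

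The per-sequence route is not just hard for non-linear offsets; it fails for every offset sequence, linear probing included. Take $S=\{a\}$, $t=2$, $A=(a,a)$. Then $A^{-1}=A^{-2}=(a)$, so both indices vacate $a+r_1$ and slot $a$ is never vacated. Thus $A$ has no bad index although $k=1$. Your displacement chain only moves forward along probe sequences, so it cannot be run backward to free a prescribed $s$.

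The averaging fallback is not an argument either:
\begin{itemize}
\item Translation equivariance gives $F_{S+\delta}=F_S$ and equal bad-pair counts for every rotation of $S$. Averaging over rotations therefore returns exactly the inequality you started with.
\item Averaging over all $k$-sets gives the exact identity $\sum_{|S|=k}nF_S(t-1)=(1-k/t)\sum_{|S|=k}F_S(t)$. This says nothing about the maximizing $S$ that $p_k$ requires.
\item The fact that a $k/n$ fraction of outcomes land in $S$ holds for an \emph{unconditioned} random sequence. You need a statement about sequences conditioned to lie in $\mathscr{H}_S(t)$, and you give no mechanism for transferring one to the other.
\end{itemize}

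The missing idea is a potential (Doob martingale) argument, which is exactly that mechanism. For $A\in[n]^i$, let $\phi(A)$ be the average of $\Occupy_S(AB)$ over $B\in[n]^{t-i}$, and let $\delta(A,e)=\phi(Ae)-\phi(A)$. Then $\phi(\varepsilon)=kt/n$, which is where your uniformity remark actually belongs, and $\phi(H)=k$ for $H\in\mathscr{H}_S(t)$. Telescoping gives
\[
\sum_{H\in\mathscr{H}_S(t)}\sum_{i=1}^t\delta(H^{<i},h_i)=F_S(t)\,k(1-t/n).
\]
Now fix $i$ and split the inner terms into good and bad pairs:
\begin{itemize}
\item The good pairs form complete fibers $\{H : H^{-i}=C\}$ with $C\in\mathscr{H}_S(t-1)$; this is your own fiber observation. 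Since $\delta(C^{<i},\cdot)$ sums to $0$ over $[n]$, good pairs contribute nothing.
\item By \Cref{lem:Aoccupylem}, $|\delta|\le 1$, so each bad pair contributes at most $1$.
\end{itemize}
Hence the number of bad pairs, $t\bigl(F_S(t)-nF_S(t-1)\bigr)$, is at least $F_S(t)\,k(1-t/n)$. This is precisely your reduced inequality, and it yields $nF_S(t-1)\le F_S(t)(1-k/t+k/n)$.
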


Throughout this section $S$, $k=|S|$, and $t>k$ are fixed.
For any $A\in [n]^i$, $i\leq t$, we define the \emph{potential} $\phi(A)$ to be
\begin{align*}
\phi(A) &= n^{-(t-i)}\sum_{B\in [n]^{t-i}} \Occupy_S(AB).
%
\intertext{In other words, $\phi(A)$ is the expected number of occupied slots in $S$ when $A$ is extended by a random $(t-i)$-length suffix $B$.  Define $\delta$ to be the change in potential:}
\delta(A,e) &= \phi(Ae)-\phi(A).
\end{align*}

Remember that $\phi$ and $\delta$ are defined with respect to fixed $S,t$.

\begin{lemma}\label{lem:potentiallem}
The potential $\phi$ and change-of-potential $\delta$ 
satisfy the following.
\begin{enumerate}
\item For any $A\in [n]^t$, $\phi(A)=\Occupy_S(A)$.
\item Letting $\varepsilon$ be the empty sequence, $\phi(\varepsilon)=\frac{kt}{n}$.
\item For any $A\in [n]^{i}$, $i<t$, 
$\sum_{e\in [n]}\delta(A,e)=0$.
\item For every $e\in [n],$ $|\delta(A,e)|\leq 1$.
\end{enumerate}
\end{lemma}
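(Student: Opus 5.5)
We verify the four items one at a time; each follows from the definition of $\phi$ and from \Cref{lem:Aoccupylem}.

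\emph{Item 1.} When $|A|=t$ the suffix $B$ ranges over $[n]^0=\{\varepsilon\}$ and the normalizing factor is $n^0=1$. Hence $\phi(A)=\Occupy_S(A)$ directly.

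\emph{Item 2.} We write $\phi(\varepsilon)=\E_B[\Occupy_S(B)]$ for uniform $B\in[n]^t$ and use linearity over the slots $s\in S$: $\phi(\varepsilon)=\sum_{s\in S}\Pr(s \text{ occupied})$. The key symmetry is that the fixed offset rule commutes with cyclic shifts. If every hash value is shifted by $c$ modulo $n$, the whole run of insertions is shifted by $c$, because the probe sequence of $a+c$ is the probe sequence of $a$ shifted by $c$. The uniform distribution on $[n]^t$ is invariant under this shift, so $\Pr(s\text{ occupied})$ does not depend on $s$. Exactly $t$ slots are occupied after $t\le n$ insertions, since the offsets form a permutation and every insertion therefore succeeds. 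Each slot is thus occupied with probability $t/n$, which gives $\phi(\varepsilon)=kt/n$.

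\emph{Item 3.} By definition, $\phi(A)$ averages $\Occupy_S(AB)$ over uniform $B\in[n]^{t-i}$. Split $B=eB'$ with $e$ uniform in $[n]$ and $B'$ uniform in $[n]^{t-i-1}$, independent. This gives the martingale identity $\phi(A)=\frac1n\sum_{e}\phi(Ae)$. Rearranging yields $\sum_e\delta(A,e)=\sum_e\phi(Ae)-n\phi(A)=0$.

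\emph{Item 4.} This is the only step that uses the structure of the hash table, and it is where we expect the real content to lie. By item 3, $\phi(A)$ is an average of the values $\phi(Ae')$, so it suffices to show $|\phi(Ae)-\phi(Ae')|\le1$ for all $e,e'$. Then $\delta(A,e)$ is an average of differences $\phi(Ae)-\phi(Ae')$, each of absolute value at most $1$. To prove this, expand both potentials over the same suffix:
\[
\phi(Ae)-\phi(Ae')=n^{-(t-i-1)}\sum_{B}\bigl(\Occupy_S(AeB)-\Occupy_S(Ae'B)\bigr).
\]
The sequences $AeB$ and $Ae'B$ differ only in position $i+1$, so removing that position gives the same sequence. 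Claim~2 of \Cref{lem:Aoccupylem} then bounds each summand by $1$ in absolute value, and averaging preserves the bound. Here $A$ has length $i<t$, since $\delta(A,e)$ is only defined when $Ae$ has length at most $t$.
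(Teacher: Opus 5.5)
Your proof is correct and follows the paper's argument item by item. Items 1 and 3 are the same direct computations. Item 4 is the same averaging of $\Occupy_S(AeB)-\Occupy_S(Ae'B)$ over $e'$ and $B$, with each term bounded by Claim 2 of \Cref{lem:Aoccupylem}. Item 2 is the paper's symmetry-plus-linearity argument; you spell out the cyclic-shift invariance and the fact that exactly $t$ slots are occupied, both of which the paper leaves implicit.
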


\begin{proof}
\noindent\underline{Part 1.} Follows directly from the definition.\medskip

\noindent\underline{Part 2.} By symmetry the probability
that any particular slot is occupied after $t$ insertions is $t/n$.  The claim follows by linearity of expectation.\medskip

\noindent\underline{Part 3.} Observe that
$\phi(A) = n^{-1}\sum_e \phi(Ae)$,
so $\sum_e (\phi(Ae) - \phi(A)) = \sum_e \delta(A,e) = 0$.\medskip

\noindent\underline{Part 4.}
For any $B\in [n]^{t-i-1}$ and any $e'\in [n]$, 
\Cref{lem:Aoccupylem} implies
$|\Occupy_S(AeB)-\Occupy_S(Ae'B)|\leq 1$.
Thus,
\begin{align*}
|\delta(A,e)|
&\leq n^{-(t-i)}\sum_{e'\in [n],B\in [n]^{t-i-1}}|(\Occupy_S(AeB)-\Occupy_S(Ae'B))|\\
&\leq n^{-(t-i)}\sum_{e'\in [n],B\in [n]^{t-i-1}}1=1.
\end{align*}
\end{proof}

Recall that $F_S(t) = |\mathscr{H}_S(t)|$
is the number of length-$t$ sequences fully occupying $S$.  Define $\Delta$ as follows
\begin{align*}
\Delta &= \sum_{H\in\mathscr{H}_S(t)}(\phi(H)-\phi(\varepsilon))\\
    &= F_S(t)\left(k - \frac{kt}{n}\right) & \text{\Cref{lem:potentiallem}.}
\end{align*}

We could have calculated $\Delta$ differently, by taking each $H\in \mathscr{H}_S(t)$ and measuring
the change in potential as each element of 
$H=(h_1,h_2,\ldots,h_t)$ 
is appended to the sequence. Note that 
the sum of these values telescopes to 
$\phi(H) - \phi(\varepsilon)$.

\begin{align*}
\Delta &=\sum_{H\in\mathscr{H}_S(t)}\sum_{i=1}^t\delta(H^{<i},h_i)=\sum_{i=1}^t\sum_{H\in\mathscr{H}_S(t)}\delta(H^{<i},h_i).
\end{align*}

Now fix $i$ and consider the last sum $\sum_{H\in\mathscr{H}_S(t)} \delta(H^{<i},h_i)$, 
Recall that $\mathscr{H}_S^{-i}(t)$ is the subset of sequences in $\mathscr{H}_S(t)$ that still occupy $S$ when the $i$th element is removed.  From \Cref{lem:H-i} we know that $|\mathscr{H}_S^{-i}(t)| = nF_S(t-1)$.

\begin{lemma}\label{lem:Hsumdelta}
$\sum_{H\in \mathscr{H}_S^{-i}(t)}\delta(H^{<i},h_i)=0$.
\end{lemma}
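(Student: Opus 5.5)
The plan is to exploit the product structure of $\mathscr{H}_S^{-i}(t)$ guaranteed by \Cref{lem:Aoccupylem}. Membership $H\in\mathscr{H}_S^{-i}(t)$ depends only on $H^{-i}=H^{<i}H^{>i}$, because the condition is $\Occupy_S(H^{-i})=|S|$. By Claim 1 of \Cref{lem:Aoccupylem}, $\Occupy_S(H)\ge \Occupy_S(H^{-i})=|S|$, so every such $H$ also lies in $\mathscr{H}_S(t)$. Hence
\[
\mathscr{H}_S^{-i}(t)=\{\,A e B \mid A\in[n]^{i-1},\ e\in[n],\ B\in[n]^{t-i},\ AB\in \mathscr{H}_S(t-1)\,\},
\]
that is, the $i$th coordinate $e$ ranges freely over all of $[n]$ independently of the rest. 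This is exactly the observation behind \Cref{lem:H-i}.

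With this description I will rewrite the sum. For $H=AeB$ we have $H^{<i}=A$ and $h_i=e$, so the summand $\delta(H^{<i},h_i)=\delta(A,e)$ does not depend on $B$. Therefore
\[
\sum_{H\in\mathscr{H}_S^{-i}(t)}\delta(H^{<i},h_i)=\sum_{A\in[n]^{i-1}}\ \sum_{\substack{B\in[n]^{t-i}\\ AB\in\mathscr{H}_S(t-1)}}\ \sum_{e\in[n]}\delta(A,e).
\]
The innermost sum vanishes by Part 3 of \Cref{lem:potentiallem}, since $|A|=i-1<t$. So every term is zero, and the lemma follows.

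The only point needing care is verifying that the constraint defining $\mathscr{H}_S^{-i}(t)$ truly does not involve $e$. Once that is settled, the set is a union of full fibres $\{AeB : e\in[n]\}$ over which $\delta(A,\cdot)$ sums to zero, and nothing else is required.
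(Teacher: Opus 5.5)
Your proposal is correct and is essentially the paper's argument: membership in $\mathscr{H}_S^{-i}(t)$ depends only on $H^{-i}$, so the $i$th coordinate ranges freely over $[n]$. Grouping by $H^{<i}$ (you additionally fix $H^{>i}$, which changes nothing), each group then contributes $\sum_{e}\delta(A,e)=0$ by Part 3 of \Cref{lem:potentiallem}.
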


\begin{proof}
Observe that if $H\in \mathscr{H}_S^{-i}(t)$
then every $H_0$ with $H^{-i} = H_0^{-i}$ is
also in $\mathscr{H}_S^{-i}(t)$.  
If we group the terms of the sum with a common
$H^{<i}$, each $h_i \in [n]$ appears an equal number of times.  By \Cref{lem:potentiallem}
the net contribution of each group is zero.
\end{proof}

Thus, to bound 
$\sum_{H\in\mathscr{H}_S(t)} \delta(H^{<i},h_i)$
it suffices to consider only sequences
in $\mathscr{H}_S(t)\setminus \mathscr{H}_S^{-i}(t)$.

\begin{lemma}\label{lem:deltabound}
$\sum_{H\in \mathscr{H}_S(t)}\delta(H^{<i},h_i)\leq F_S(t)-nF_S(t-1)$.
\end{lemma}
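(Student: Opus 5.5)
We split the sum over $\mathscr{H}_S(t)$ into the part over $\mathscr{H}_S^{-i}(t)$ and the part over $\mathscr{H}_S(t)\setminus\mathscr{H}_S^{-i}(t)$. By \Cref{lem:Hsumdelta}, the first part contributes exactly $0$. Hence
\[
\sum_{H\in \mathscr{H}_S(t)}\delta(H^{<i},h_i)
=\sum_{H\in \mathscr{H}_S(t)\setminus\mathscr{H}_S^{-i}(t)}\delta(H^{<i},h_i).
\]

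For the remaining part, we bound each term by part 4 of \Cref{lem:potentiallem}, which gives $\delta(H^{<i},h_i)\le |\delta(H^{<i},h_i)|\le 1$. The sum is then at most the number of its terms, namely $|\mathscr{H}_S(t)\setminus\mathscr{H}_S^{-i}(t)|$.

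It remains to count this set. Since $\mathscr{H}_S^{-i}(t)\subseteq\mathscr{H}_S(t)$, its size is $|\mathscr{H}_S(t)|-|\mathscr{H}_S^{-i}(t)|$. The containment holds by Claim 1 of \Cref{lem:Aoccupylem}: $\Occupy_S(H^{-i})=|S|$ forces $\Occupy_S(H)\ge |S|$, hence $\Occupy_S(H)=|S|$. Using $F_S(t)=|\mathscr{H}_S(t)|$ and \Cref{lem:H-i}, this size equals $F_S(t)-nF_S(t-1)$, which is the claimed bound.

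No step is a real obstacle once \Cref{lem:Hsumdelta} and \Cref{lem:H-i} are available. The only point needing care is the containment $\mathscr{H}_S^{-i}(t)\subseteq\mathscr{H}_S(t)$, so that the cardinality of the difference is exactly $F_S(t)-nF_S(t-1)$.
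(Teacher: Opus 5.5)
Your proof is correct and follows the paper's argument exactly: you drop the $\mathscr{H}_S^{-i}(t)$ terms via \Cref{lem:Hsumdelta}, bound each remaining term by $1$ using part 4 of \Cref{lem:potentiallem}, and count the remaining terms with \Cref{lem:H-i}. Your explicit check of the containment $\mathscr{H}_S^{-i}(t)\subseteq\mathscr{H}_S(t)$ via Claim 1 of \Cref{lem:Aoccupylem} is a detail the paper only asserts in passing.
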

\begin{proof}
\begin{align*}
\sum_{H\in \mathscr{H}_S(t)}\delta(H^{<i},h_i)
&=\sum_{H\in \mathscr{H}_S(t)\setminus \mathscr{H}_S^{-i}(t)}\delta(H^{<i},h_i) 
& \text{\Cref{lem:Hsumdelta}}\\
&\leq \sum_{H\in \mathscr{H}_S(t)\setminus \mathscr{H}_S^{-i}(t)} 1
& \text{\Cref{lem:potentiallem}}\\
&= F_S(t)-nF_S(t-1) & \text{\Cref{lem:H-i}}
\end{align*}
\end{proof}

We now have all the pieces to assemble a proof of \Cref{thm:recursivepcalculation}.

\begin{proof}[Proof of \Cref{thm:recursivepcalculation}]
Using \Cref{lem:deltabound} to upper bound $\Delta$, 
we now have
\begin{align*}
\left(k-\frac{kt}{n}\right)F_S(t) &=\Delta \leq t(F_S(t)-nF_S(t-1))
\intertext{hence}
nF_S(t-1)&\leq \left(1-\frac{k}{t}+\frac{k}{n}\right)F_S(t).
\intertext{Since this bound applies to any $S$ with size $k$,
$p_k(t-1)$ is upper bounded in the same way.}
p_k(t-1) &\leq p_k(t)\left(1-\dfrac{k}{t}+\dfrac{k}{n}\right).
\end{align*}
\end{proof}

\section{Analysis of the Expected Insertion Time}\label{sect:expected-insertion-time}

By \Cref{lem:insertioncost} the expected insertion cost is bounded in terms of $p_k(t)$, and \Cref{thm:recursivepcalculation} gives us a recursive expression for $p_k(t)$.
In the base case we have
$p_k(n-1) = 1-k/n$.
After $n-1$ insertions there is one free slot, 
which by symmetry is outside any particular $k$-element set with probability $1-k/n$.  
Applying \Cref{thm:recursivepcalculation}, we bound $p_k(t)$ as follows,
where $d = n-t$.

\begin{align*}
p_k(t) 
&\leq \left(1-\frac{k}{n}\right)\prod_{j=t+1}^{n-1} \left(1-\frac{k}{j}+\frac{k}{n}\right) & \text{\Cref{thm:recursivepcalculation}}\\
&\leq \exp\left[-k\left(\frac{1}{n} + \sum_{j=t+1}^{n-1}\left(\frac{1}{j}-\frac{1}{n}\right)\right)\right]\\
&\leq \exp\left[-k\left(\frac{1}{n} + \sum_{\ell=1}^{d-1}\left(\frac{\ell}{n(n-\ell)}\right)\right)\right]\\
&\leq \exp\left[-k\left(\frac{1}{n} + \frac{d(d-1)}{2n^2}\right)\right]\\
&\leq \exp\left[-\frac{kd^2}{2n^2}\right].
\end{align*}
For $d=n-t = \epsilon n$, this is $\exp(-k\epsilon^2/2)$. Applying 
\Cref{lem:insertioncost}, 
the expected insertion cost is at most

\begin{align*}
1+\sum_{k=1}^n e^{-k\epsilon^2/2}
\leq 1+\int_0^{+\infty}e^{-x\epsilon^2/2} \,\mathrm{d} x
=1+\frac{2}{\epsilon^2}.
\end{align*}

\section{Conclusion}

An interesting open problem is to determine the 
independence necessary to implement fixed offset or
quadratic hashing that preserves the guarantees of this paper or~\cite{HuKLWYZ26b,KuszmaulX24,GuoPW27}.
It is known that 5-wise independence suffices for 
linear probing~\cite{PaghPR09}, but 4-wise 
independence is in general not enough~\cite{PatrascuT16}.

\bigskip

\paragraph{AI Acknowledgment.}
The original idea came from interacting with GPT-6 Astra.
The simplified presentation of the paper is due to 
the authors, who take full responsibility for the 
correctness of all statements and proofs.

\bibliographystyle{alpha}
\bibliography{references}

\end{document}